\newtheorem{theorem}{Theorem}
\newtheorem{corollary}[theorem]{Corollary}
\newtheorem{proposition}[theorem]{Proposition}
\newtheorem{lemma}[theorem]{Lemma}
\newtheorem{definition}[theorem]{Definition}
\theoremstyle{remark}
\newtheorem{example}[theorem]{Example}
\newcommand{\R}{\mathbb R}
\renewcommand{\P}{\mathbb{P}}
\newcommand{\F}{\mathcal{F}}
\newcommand{\X}{\mathcal{X}}
\newcommand\abs[1]{\left| #1  \right|}
\newcommand{\E}{\mathbb E}
\newcommand{\GE}{\mathrm{WE}}
\newcommand{\UC}{c_\textnormal{unc}^*}
\newcommand{\fed}{}
\newcommand{\omer}{\color{red}}
\newenvironment{proofof}[1]{\begin{proof}[\textnormal{\textbf{Proof of #1}}]}{\end{proof}}
\newenvironment{proofsketchof}[1]{\begin{proof}[\textnormal{\textbf{Proof sketch of #1}}]}{\end{proof}}
\title{Protecting the Protected Group: Circumventing Harmful Fairness}
\author{

    %Authors
    % All authors must be in the same font size and format.
    Omer Ben{-}Porat\textsuperscript{\rm 1}, Fedor Sandomirskiy\textsuperscript{\rm 2,\rm 3}, Moshe Tennenholtz\textsuperscript{\rm 2}\\
}
\begin{document}

\maketitle

\begin{abstract}
%Machine Learning (ML) algorithms shape our lives. Banks use them to determine if we are good borrowers; IT companies delegate them recruitment decisions; police apply ML for crime-prediction, and judges base their verdicts on ML. However, real-world examples show that such automated decisions tend to discriminate against protected groups.  This potential discrimination generated a huge hype both in media and in the research community. Quite a few formal notions of fairness were proposed, which take a form of constraints a ``fair'' algorithm must satisfy.

The recent literature on fair Machine Learning manifests that the choice of fairness constraints must be driven by the utilities of the population. However, virtually all previous work makes the unrealistic assumption that the exact underlying utilities of the population (representing private tastes of individuals) are known to the regulator that imposes the fairness constraint. In this paper we initiate the discussion of the \emph{mismatch}, the unavoidable difference between the underlying utilities of the population and the utilities assumed by the regulator. We demonstrate that the mismatch can make the disadvantaged protected group worse off after imposing the fairness constraint and provide tools to design fairness constraints that help the disadvantaged group despite the mismatch.

%Quite a few formal notions of fairness were proposed, which take a form of constraints a ``fair'' algorithm must satisfy. We focus on scenarios where fairness is imposed on a self-interested party (e.g., a bank that maximizes its revenue). We find that the disadvantaged protected group can be worse off after imposing a fairness constraint. We introduce a family of \textit{Welfare-Equalizing} fairness constraints that equalize per-capita welfare of protected groups, and include \textit{Demographic Parity} and \textit{Equal Opportunity} as particular cases. In this family, we characterize conditions under which the fairness constraint helps the disadvantaged group. We also characterize the structure of the optimal \textit{Welfare-Equalizing} classifier for the self-interested party, and provide an algorithm to compute it.Overall, our \textit{Welfare-Equalizing} fairness approach provides a unified framework for discussing fairness in classification in the presence of a self-interested party. 

\end{abstract}	

\section{Introduction}
At first glance, algorithms may seem free of human biases such as sexism or racism. However, in many situations, they are not: the automated recruiting tool used by Amazon was favoring men \cite{doleac2016does}; judges in the US use the COMPAS algorithm to estimate the probability that the defendant will re-offend while this algorithm was accused of being biased against black people~\cite{larson2016we}.  See \citet{o2016weapons} for many more examples. These challenges call for imposing fairness constraints on algorithm design and, in particular, on machine-learned classifiers, which are the subjects of this paper. As a running example, consider a bank that gives loans to potential borrowers and is regulated by a policy-maker. The bank learns a decision rule (namely, a classifier) from historical data to decide for whom to approve or decline the loan to maximize its revenue that is increasing with the number of repaid loans. As repeatedly observed in the past, the resulting classifier may be biased against a protected group (e.g., ethnic minority). Hence, the regulator may wish to impose a fairness constraint on the bank.

Bias is deemed unjust. Beyond that, it affects the \textit{welfare} of protected groups, as borrowers have preferences towards the different outcomes of the classification, captured by utility functions.\footnote{We use the term utility for the well-being of a single individual and welfare for the aggregated well-being of groups of individuals.}
%Beyond the fact that bias is deemed unjust, it affects the \textit{welfare} of protected groups—borrowers draw utilities from the different outcomes of the classification. %Noticeably, borrowers may differ in their value for getting the loan depending on their access to alternative sources of money and on the purpose of borrowing.
%Following the recent trends of fair ML literature \cite{??}, we assume that agents may have different preferences over the outcomes of the classification, which are captured by utility functions.\footnote{We use the term utility for the well-being of a single individual and welfare for the aggregated well-being of groups of individuals.} E.g., borrowers may differ in their value for getting the loan depending on their access to alternative sources of money and on the purpose of borrowing.
Consequently, the goal of imposing fairness constraints is to improve the welfare of the \textit{disadvantaged} group (the originally-discriminated one). A natural approach by which the  regulator can achieve this goal is by assuming a utility function and requiring the bank's classifier to equalize the protected groups' welfare. %One way to do so is by assuming an underlying utility function and requiring a fair classifier to equalize the welfare of the protected groups. 
The two most popular fairness constraints, \emph{Demographic Parity}  \cite{agarwal2018reductions,dwork2012fairness}  and \emph{Equal Opportunity} \cite{hardt2016equality} (henceforth DP and EO, respectively) are special cases of this approach for particular utility functions. For instance, DP is recovered by assuming that every agent gets a utility of one when receiving the loan, and zero otherwise.

%It is not surprising that imposing constraints on the decision-maker (the bank in our running example) may harm the well-being of the decision-maker (see, e.g., \citet{corbett2017algorithmic}). 
The possibility that fairness may harm the well-being of those it is designed to protect, i.e., that it can harm the disadvantaged group's welfare, seems counter-intuitive. However, it is well known both theoretically and empirically that the most intuitive fairness constraint of Unawareness (which forbids using the sensitive attribute in classification) can be harmful \cite{corbett2018measure,dwork2018decoupled,ustun2019fairness,doleac2016does}. For example, \citet{doleac2016does} show that the ``ban the box'' policy, adopted by the United States and preventing employers from seeing applicants' criminal background, decreased the welfare of discriminated minorities (the chances of getting a job). Additionally, \citet{liu2018delayed} discovered that DP and EO are not free of the same flaw if we consider long-term consequences. They show that fairness constraints may force the bank to give loans to those members of the disadvantaged group who, otherwise, would not have received the loans due to the high probability of default. Therefore, such unqualified borrowers are likely to have problems with paying back the loan. This increased default ratio would harm the disadvantaged group's average credit score, thereby harming its welfare in the long run.
%\textbf{Do we want to add these references to the paragraph above?} "The shortcomings of Unawareness are well-known (see, e.g., \cite{corbett2018measure,dwork2018decoupled,ustun2019fairness}). As we formally show in the appendix,  Unawareness can make all three parties strictly worse off: both groups and the bank, \textit{regardless} of the utility function $u$." \textbf{(Moved from section 3)}

\subsection{Our contribution}
In this paper, we uncover another mechanism underlying harmful fairness even in static settings:

\begin{quote}
\emph{Imposing a fairness constraint can make the disadvantaged group worse off if the fairness constraint and the utilities of the population mismatch.}
\end{quote}
Following the recent trends of fair ML literature, we assume that agents may have different preferences over classification outcomes, which are captured by utility functions. For example, borrowers may differ in their value for getting the loan depending on their access to alternative sources of money and on the purpose of borrowing. With utilities in hand, we can use social welfare to evaluate a group's well-being for any given classifier.
To talk about the mismatch between utilities and fairness constraints, the latter has to be defined in utilitarian terms. As we described above, fairness constraints like DP and EO are naturally cast as equalizing \textit{some} welfare functions of the protected groups. %We generalize DP and EO by introducing a broad family of \emph{Welfare-Equalizing fairness concepts} to study 
However, the difficulty in applying any welfare-based fairness constraint is that the utilities must be known to the designer, while these utilities represent individuals' private tastes. Hence, even if domain experts determine the utilities used in a fairness constraint (henceforth, \emph{assumed utilities}), they can only approximate the actual utilities of the population (\emph{underlying utilities} in what follows). This discrepancy leads to the following conclusion.
\begin{quote}
\emph{In practice, the mismatch between the  underlying utilities of the population and the utilities assumed by the regulator is unavoidable.} 
\end{quote}
%Together with the observation that mismatch can make fairness harmful, this becomes a serious caution for regulators that design fairness requirements for a certain industry, e.g., banking. We complement this negative massage with several positive ones. caution with a positive message. It turns out that small discrepancy between underlying utilities and those used in fairness is innocuous:
Together with the observation that the mismatch can make fairness harmful, this becomes a serious caution for regulators that design fairness constraints for a certain industry, e.g., banking. We complement this caution with a positive message. Naturally, a small discrepancy between underlying and assumed utilities must be innocuous. However, we characterize a much more applicable and promising connection; we show that
\begin{quote}
\emph{Fairness constraints help the disadvantaged group whenever the utilities assumed by the regulator and the underlying utilities of the population agree on which group is disadvantaged}.
\end{quote}
Finally, we suggest additional ways to deal with the mismatch if the underlying utilities can be approximated from data.

\subsection{Related Work on  Economic Ideas in Fair Classification}
Welfare-Equalizing, our approach to fairness, has a long history in normative economics~\cite{pazner1978egalitarian,roemer1986equality} (where it is known under the name of egalitarianism) and political philosophy~\cite{rawls2009theory}; it was used for fair resource allocation without money transfers~\cite{li2013egalitarian}, in the field of cooperative games~\cite{dutta1989concept} and bargaining problems~\cite{kalai1975other}. In contrast to recent papers on the utilitarian approach to fair classification \cite{heidari2018fairness,heidarifat19}, which suggest maximizing the minimal welfare among the protected groups, we strengthen this desideratum by making it a normative requirement: the welfare must be equal among the subgroups defined by a sensitive attribute. This normative condition allows one to separate the \emph{fairness constraint} (which may be imposed by a regulator) from the \emph{selfish objective of the decision-maker} (a revenue-maximizing bank in our running example) and thus allows one to analyze how decisions change after imposing the fairness constraint. Another advantage of the Welfare-Equalizing concept is the simple threshold structure of the optimal fair classifier (similar to the one for Demographic Parity or Equal Opportunity \cite{corbett2017algorithmic}), which makes it efficiently computable.
%
%\cite{brams1996fair,moulin2004fair,caragiannis2012efficiency,gal2016fairest}. {\fed [[FEDOR: I would like to delete even these references. Interested reader will find them in \citet{golz2019paradoxes}. Or we can add just one reference to a handbook of COMSOC.]]} 

This work joins recent attempts \cite{rambachan2020economic,rambachan2020economicA,HuFat2020,ElzaynFat2020,hossaindesigning} to bring better economic understanding to fairness in ML; we address some of them here and refer the reader to \citet{finocchiarofairness} for a comprehensive survey. \citet{HuFat2020} propose an optimization framework for fair classification and welfare analysis. In their modeling, a learner executes a soft-margin SVM with the additional constraint of group fairness: limiting the two groups' welfare discrepancy to a predefined quantity. They provide a sensitivity analysis, showing that applying stricter fairness constraints (decreasing the allowed discrepancy) can worsen welfare outcomes
for both groups. Their findings are in line with ours, but our analysis is fundamentally different; in particular, they do not address the utility mismatch issue. Imposing fairness constraints on profit-maximizing entities, as we do in this paper, is an understudied point of view, as noted recently by \citet{ElzaynFat2020}.

There are also several recent attempts to harness economic principles to fair classification. 
For example, \citet{golz2019paradoxes} treat fair classification as an allocation of goods, where there is a fixed amount of resources to distribute. They examine the compatibility (or lack thereof) of Equalized Odds with axioms of fairness from the economic literature on fair division; see~\cite{brandt2016handbook} for a survey. Envy-freeness, the dominant fairness concept in economics, plays a crucial role in several recent papers at the intersection of economics and AI \cite{caragiannis2019unreasonable, cohler2011optimal,benade2018make,guruswami2005profit, gal2016fairest, plaut2018almost},  including several works on fair classification \cite{zafar2017parity, balcan2019envy,ustun2019fairness,hossaindesigning}.
However, these papers on fair classification focus on sample complexity and generalization \cite{balcan2019envy}, or asserting that users favor treatment disparity \cite{ustun2019fairness,zafar2017parity} in health applications.

The work most related to ours is the paper by \citet{hossaindesigning}. Concurrently to and independently of our work, \citet{hossaindesigning} argue for group equability in fair classification, which coincides with our Welfare-Equalizing fairness constraint. They, too, show that their concept subsumes previously suggested fairness notions. However, there is a significant difference between the two works. First, \citet{hossaindesigning} are interested in learning the best classifier from data, and hence address issues of generalization from samples and differentiability. In contrast, we devote our paper to societal considerations of fair classification, and thus consider fairness as a post-processing step similarly to \cite{corbett2017algorithmic,hardt2016equality}. Additionally, in contrast to \citet{hossaindesigning}, our analysis focuses on the mismatch between the utilities assumed by the regulator and the actual, underlying utilities.

%{\fed \textbf{WE SHOULD PROBABLY DELETE THE REST STARTING FROM HERE}}{\omer not sure, but OK} {\fed \textbf{OR REWRITE: the claims like "welfare-equalizing fairness helps the disadvantaged group" are irrelevant and confusing given the current message of the paper}} Second, we investigate the benefits of imposing welfare-equalizing fairness, and show that it always helps the disadvantaged group. They, however, do not address such considerations. And third, we treat cases of unknown utility function (see Corollary \ref{th_robust} and Section \ref{sect_computation}), and show that the disadvantaged group can still be protected in such cases, which lends welfare-equalizing fairness further credibility. In their work, full information of the utility function is vital, as it is part of the classifier learning task. 
%{\fed question: should we say that our paper was first explicitly?} 
\subsection{Paper structure}

In Section \ref{sect_model}, we present our formal model, define the Welfare-Equalizing fairness framework, and prove structural results for optimal fair classifiers in Subsection \ref{subsect_optimal_GE}. Section \ref{sect_implications} 
deals with the implications of a mismatch between the population's underlying utilities and the utilities assumed by the regulator. Finally, Section~\ref{sect_computation} describes how to compute the bank-optimal fair classifier if the assumed utilities approximate the underlying utilities well enough.

\section{Model}\label{sect_model}
We consider a general classification problem, where agents have an ex-ante non-observable ``quality'' correlated with observable attributes. 
We keep using the metaphor of a bank that predicts the reliability of the population and makes  lending decisions; however, the same setting captures student admissions, recruiting, assessing the recidivism risk for a criminal, etc.

There are three parties in the model: a heterogeneous population of potential borrowers; a bank that makes lending decisions based on the observable attributes of  borrowers and cares only about its revenue; and a regulator that cares about fairness  and can restrict the set of lending policies available to the bank by imposing a fairness constraint.  We now present these parties formally.

\subsubsection{Borrowers} We assume that each potential borrower (henceforth borrower) is associated with a pair of observable attributes $(X,A)\in \X\times \{0,1\}$.  Here $A$ is a binary\footnote{The assumption of the dichotomy of $A$ is made for simplicity. Extending our results to the non-binary case (ethnicity) is straightforward.} sensitive attribute (e.g., gender) and $X\in \X$ encodes all other characteristics of a borrower, e.g., employment history, salary, education, assets and so on. We do not impose any assumptions on $\X$.  By $\{A=0\}$ and $\{A=1\}$ we denote the groups of all borrowers with the sensitive attribute equal to $0$ or $1$, respectively; we call $\{A=a\}$ a \emph{protected group}. Furthermore, in addition to the observable attributes $X$ and $A$, every borrower is also associated with an unobservable variable  $Y\in \{0,1\}$, which describes whether that borrower will pay back the loan or not. For brevity, we call borrowers with $Y=1$ and $Y=0$, \emph{good} and \emph{bad}, respectively. The statistical characteristics of the population are described by a probability space $(\Omega,\F,\P)$; so $X=X(\omega)$,   $A=A(\omega)$ and $Y=Y(\omega)$ are random variables on $\omega \in \Omega$. By small letters $(x,a,y)$ we denote realizations of $X$, $A$, and $Y$, i.e., generic elements of $\X\times \{0,1\}\times\{0,1\}$.

Each borrower $(x,a,y)$ obtains a \textit{utility} when receiving the loan. This utility can depend on $x$ and $a$ in many ways, but what is more critical is that it must depend on the non-observable quality of the borrower, namely $v=v(x,a,y)$. To simplify the presentation, we assume that the utility from a rejected application is zero and that the average utility of a borrower with given $x$ and $a$ is non-negative,\footnote{Zero utility for not getting a loan is a normalization-condition: before borrowing money, everybody is at zero utility level. Non-negativity of $v$ can be regarded as a rationality assumption on borrowers: no rational agent would apply for a loan if she/he expects that getting the loan brings negative utility while not getting gives $0$.}  i.e., $\E[v(X,A,Y)\mid X=x,A=a ]\geq 0$. 
However, both assumptions can be relaxed. We refer to $v$ as the \emph{underlying utility} of the population.

\subsubsection{The bank} We assume that the bank knows the joint distribution of $(X,A,Y)$ from historical data. In particular, it knows the exact conditional probability of being a good borrower given the observable attributes; we denote it by $p(x,a)=\P(Y=1\mid X=x, A=a)$.\footnote{Indeed, this is aligned with previous works that consider fairness as a post-processing step \cite{corbett2017algorithmic,hardt2016equality}.}

The bank makes lending decisions based on $X$ and $A$ but without observing $Y$. It uses a classifier $c: \X\times \{0,1\} \to [0,1]$ where $c(x,a)$ is the probability of giving a loan to a population of borrowers with  $X=x$ and $A=a$. Each loan given to a good borrower brings a revenue\footnote{In contrast to the rest of the literature, we allow the  bank's revenue to  depend on the non-sensitive attribute $X$. This becomes important if $X$ also encodes the type of loan a client is applying for, e.g., different borrowers may need a different amount of money and thus bring a different revenue/loss.} of $\alpha_+(X)>0$ to the bank while each bad borrower leads to a loss of $\alpha_-(X)>0$; we assume that $\alpha_\pm(x)$ are bounded functions of $x\in \X$. The bank's revenue depends on the choice of a classifier $c$, and is defined by
\begin{equation}\label{eq_objective}
R(c)=\E\left[c(X,A)\left(  \alpha_+(X) Y - \alpha_-(X)(1-Y)  \right)\right].
\end{equation}
To ease notation, we define $t(x)$ and $r(x,a)$ such that for every $a\in \{0,1\}, x\in \X$
\begin{align}
& t(x)\coloneqq\frac{\alpha_-(x)}{ \alpha_+(x)+\alpha_-(x)}, \label{eq_t(x)}\\ 
& r(x,a)\coloneqq\left(\alpha_+(x)+\alpha_-(x)\right) \left( p(x,a)-t(x)\right); \label{eq_r(x)}
\end{align}
hence, we can  rewrite Equation \eqref{eq_objective} by taking a conditional  expectation with respect to $A,X$ as
{\thinmuskip=0mu
\medmuskip=0mu plus 0mu minus 2mu
\thickmuskip=2mu plus 2mu
\begin{align}\label{eq_objective_with_t}
%\footnotesize
R(c)&=\E\left[c(X,A)\big(\alpha_+(X) p(X,A)-\alpha_-(X)(1-p(X,A)\big) \right]\nonumber\\
&=\E\left[r(X,A) c(X,A) \right].
\end{align}}%
The goal of the bank is to maximize $R(c)$ over the set of \textit{feasible} classifiers, i.e., classifiers that satisfy the regulator's constraints. 
\subsubsection{The regulator}\label{subsect_GE}
The regulator evaluates the well-being of a group using its \emph{welfare}: the expected utility of its members.
For an underlying utility function $v$ and a classifier $c$, the welfare of the subgroup $\{A=a\}$ is given by
\begin{equation}\label{eq_welfare_def}
 W_{v,c}(a)=\E\left[v(X,A,Y)c(X,A)\mid A=a\right].~
\end{equation}
The regulator aims to equalize welfare among the protected groups. However, as we discuss in the introduction, the underlying utility $v$ is unknown to the regulator; thus, the regulator is forced to use a certain substitute $u$ instead. We refer to $u$ as the \emph{assumed utility}; ideally, the assumed utility must be an approximation to the underlying one.

The objective of the regulator is captured by the following $u$-\textit{Welfare-Equalizing} constraint ($u$-WE for abbreviation).
\begin{definition}[$u$-WE classifier]\label{def:equalizing}
Given a utility-function $u$, a classifier $c$ is \textit{$u$-Welfare-Equalizing} if
\begin{equation}\label{eq: u-WE}
W_{u,c}(0)=W_{u,c}(1),
\end{equation}
i.e., if $c$ equalizes the welfare among the two protected groups. The set of all such classifiers is denoted by  $\GE(u)$.
\end{definition}
Note that Welfare-Equalizing constraint is defined with respect to the assumed utility.

\subsection{Special cases of Welfare-Equalizing fairness}\label{subsec_WE_contains_DP_EO}
The framework of {Welfare-Equalizing} fairness
allows one to analyze existing fairness constraints in a unified manner. For instance,
\begin{itemize}[leftmargin=*]
    \item The fairness constraint of \emph{Demographic Parity} (DP) (e.g., \cite{agarwal2018reductions,dwork2012fairness}) requires that the fraction of those who receive loans in the two groups must be the same. Formally, a classifier $c$ satisfies DP if $\E[c(X,A)\mid A=0]=\E[c(X,A)\mid A=1]$. It is a special case of WE fairness with $u(x,a,y)\equiv 1$ for any triplet $(x,a,y)$.
    \item Motivated by drawbacks of DP, \citet{hardt2016equality} concluded that good and bad borrowers within protected groups must be treated separately and introduced the concept of \emph{Equal Opportunity} (EO). Under this fairness constraint, the fraction of good borrowers who get loans must be the same in the two subgroups. Formally, a classifier $c$ satisfies EO if 
\begin{equation*}
\thinmuskip=1mu
\medmuskip=1mu plus 1mu minus 1mu
\thickmuskip=4mu plus 4mu
\E\left[c(X,A)\mid Y=1,A=0\right]= \E\left[c(X,A)\mid Y=1,A=1\right].
\end{equation*}
We recover EO by setting $u(x,a,y)= y\cdot \beta_a$. The coefficients $\beta_a=\nicefrac{1}{\E[Y\mid A=a]}$ normalize the maximal possible welfare in each group to~$1$. Such a rescaling is known under the name of ``relative welfare'' and is commonly used in economics to make welfare or utilities among groups comparable~\cite{kalai1975other}.
    \item Borrowers can differ in the amount of money $m$ they need. We can assume that information about $m$ is encoded in $X$, so $m=m(X)$. Then, a straightforward generalization of EO is the following concept of  \emph{Heterogeneous}-EO given by
    $u(x,a,y)=  y\cdot m(x)\cdot \beta_a$ with $\beta_a=\nicefrac{1}{\E[m(x)\mid A=a]}$.
    We can capture any other heterogeneity similarly (e.g., different interest rates, time-period, and payment schedules).
\end{itemize}

\subsection{Structural properties of the bank-optimal classifiers}\label{subsect_optimal_GE} 
In this subsection, we analyze classifiers that are optimal for the bank. We first state the result of \citet{corbett2017algorithmic}, who characterize the structure of the bank-optimal \textit{unconstrained} classifier. Then, we build upon their results for the \textit{constrained} case. Namely, we assume that the regulator imposes the $u$-WE fairness constraint on the bank and explore the structure of the bank-optimal classifiers. We show that the optimal classifiers have a generalized threshold structure, a fact that is extensively used in Sections~\ref{sect_implications} and~\ref{sect_computation}.

\subsubsection{Unconstrained classifier $\UC$}
If the regulator imposes no constraint on the bank, i.e., the bank is free to choose any classifier, then the revenue-maximizing classifier  has a simple form \cite{corbett2017algorithmic}.  Only borrowers with $r(x,a)>0$ are profitable for the bank, which is equivalent to the probability of paying back $p(x,a)$ being greater than $t(x)$ (recall that $t$ and $r$ are defined by Equations~\eqref{eq_t(x)} and~\eqref{eq_r(x)}). Consequently, the optimal lending policy is given by the following threshold classifier $\UC$: all borrowers with $p(x,a)>  t(x)$ get loans ($\UC(x,a)=1$) and all borrowers with $p(x,a)\leq t(x)$ are rejected ($\UC(x,a)=0$).\footnote{For definiteness, we assume that if the bank finds the two decisions equally profitable (the knife-edge case $p(x,a)=t(x)$), it chooses  the one with fewer loans given (e.g., this policy minimizes paperwork).}

\subsubsection{Constrained classifier $c^*_{\GE(u)}$}
Consider the general, constrained case, where the regulator imposes $u$-WE fairness on the bank. For a fixed and given assumed utility $u$, we denote by $c^*_{\GE(u)}$ the classifier that maximizes the bank's revenue $R(c)$ (see Equation \eqref{eq_objective}) among all $u$-WE classifiers $c\in \GE(u)$. The set of $u$-WE classifiers is non-empty since $0\in \GE(u)$ and, therefore, the bank's optimization problem is well-defined. 
To ease notation, we denote by $\overline u(x,a)$ the assumed utility of a borrower associated with $(x,a)$ averaged over the possible values of $Y$,
\begin{align*}
\overline u (x,a)&=\E[u(X,A,Y)\mid X=x,A=a ]\\
&=u(x,a,1)p(x,a)+u(x,a,0)(1-p(x,a)).    
\end{align*}
Further,  we denote by $R_a^*(w)$ the maximal revenue that the bank could extract from the group $\{A=a\}$ at the (assumed) welfare level $W_{u,c}(a)=w$. Formally,
\[
R_a^*(w)=\max_{\{c:\, \X\to[0,1]\, \mid\, W_{u,c}(a)=w\}} \E\left[r(X,A)\cdot c(X)\mid A=a\right],
\]
where $r$ is given by Equation~\eqref{eq_r(x)}.
The following Proposition \ref{prop_optimal_GE} shows that the bank-optimal constrained classifier always exists and reveals its structure.
\begin{proposition}\label{prop_optimal_GE}
The bank-optimal $u$-WE classifier $c^*_{\GE(u)}$ exists. Furthermore, each optimal classifier has the following form: 
\begin{equation}\label{eq_optGE_classifier}
c^*_{\GE(u)}(x,a)=
\begin{cases}
1     &  r(x,a)> \lambda_a \overline{u}(x,a)  \\
0     &  r(x,a)< \lambda_a \overline{u}(x,a) \\
\tau_a(x) & r(x,a) = \lambda_a \overline{u}(x,a)
\end{cases}.
\end{equation}
The group-dependent thresholds $\lambda_{a}$, $a\in\{0,1\}$  belong to the super-gradient\footnote{For a concave function $f=f(t), \ t\in [t_0,t_1]$, the super-gradient $\partial_{t} f $  is the set of all $q\in\R$ such that $f(t')\leq f(t)+q(t'-t)$ for all $t'$. If $f$ is continuous, then for any $t$ the super-gradient is non-empty, see \citet{rockafellar2015convex}.} of the subgroup-optimal revenue $R_a^*(w)$  (a concave function of $w$) computed at the welfare level $w^*$ maximizing the total bank's revenue $\P(A=0)R_0^*(w)+\P(A=1)R_1^*(w)$. The functions $\tau_{a}:\, \X\to [0,1]$ are arbitrary\footnote{In particular, there always exists $c^*_{\GE(u)}$ with constant $\tau_a$, i.e., independent of $x$.} up to the constraint that $c^*_{\GE(u)}$ provides the desired welfare level $w^*$ for both groups,   $w^*=W_{u,c^*_{\GE(u)}}(0)=W_{u,c^*_{\GE(u)}}(1)$.
\end{proposition}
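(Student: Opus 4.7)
The plan is to decompose the optimization along the two protected groups and reduce the problem to a parametric family of one-dimensional linear programs. Since $R(c)$ and the welfare functionals $W_{u,c}(a)$ are all integrals of $c$ weighted by functions of $(X,A)$, I can condition on $A=a$ and write
\[
R(c)=\sum_{a\in\{0,1\}}\P(A=a)\cdot \E[r(X,a)c(X,a)\mid A=a],\qquad W_{u,c}(a)=\E[\overline u(X,a)c(X,a)\mid A=a],
\]
using the tower property to replace $u(X,A,Y)$ with $\overline u(X,A)$ inside the welfare. Consequently, for any feasible $c$ the pair of numbers $(W_{u,c}(0),W_{u,c}(1))$ sits on the diagonal, and maximizing $R$ subject to $\GE(u)$ is equivalent to: first, for each group $a$ and each feasible welfare level $w$, solve the scalar-constrained LP defining $R_a^*(w)$; then choose a common $w^\ast$ maximizing $\P(A=0)R_0^*(w)+\P(A=1)R_1^*(w)$.

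For the inner subproblem I would introduce a Lagrange multiplier $\lambda_a\in\R$ for the welfare equality constraint and rewrite the subproblem's Lagrangian pointwise as $\E\bigl[(r(X,a)-\lambda_a\overline u(X,a))c(X,a)\mid A=a\bigr]+\lambda_a w$. Since $c(x,a)\in[0,1]$ enters linearly, the pointwise maximizer is exactly the generalized threshold rule of \eqref{eq_optGE_classifier}: $1$ where $r>\lambda_a\overline u$, $0$ where $r<\lambda_a\overline u$, and anything in $[0,1]$ on the indifference set. By varying the free part $\tau_a$ on the indifference set the delivered welfare sweeps an interval, so one can always tune $\tau_a$ to land exactly on the target welfare $w^\ast$; this is why $\tau_a$ appears as an arbitrary function in the statement, subject only to the welfare-matching constraint.

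For the outer step I would verify that $R_a^*$ is a concave function of $w$. This follows from the fact that the feasible set of classifiers $\{c:\X\to[0,1]\}$ is convex and both the objective and constraint are linear in $c$: any convex combination of two welfare-constrained classifiers is a feasible classifier for the convex combination of welfare levels, and the revenue combines convexly too. Concavity of $R_a^*$ then guarantees existence of a maximizer $w^\ast$ of $\P(A=0)R_0^*+\P(A=1)R_1^*$ on the common effective domain. The identification $\lambda_a\in\partial_w R_a^*(w^\ast)$ is the standard value-function/Lagrangian duality for concave envelopes: any supporting hyperplane slope at $w^\ast$ furnishes a valid multiplier, and any valid Lagrange multiplier for the constrained LP at level $w^\ast$ is a super-gradient of the value function.

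The last thing to handle is existence of $c^*_{\GE(u)}$ itself, which I would establish by a compactness argument: the set of classifiers is the unit ball of $L^\infty$ under the law of $(X,A)$, hence weak-$\ast$ compact by Banach--Alaoglu; $R(c)$ and $c\mapsto W_{u,c}(0)-W_{u,c}(1)$ are weak-$\ast$ continuous because $r,\overline u$ and the indicators of $\{A=a\}$ are in $L^1$ (they are bounded by assumption on $\alpha_\pm$ and $u$), so the feasible set is closed and the supremum is attained. The main obstacle is the technical one of carrying the super-gradient identification through when $R_a^*$ has a kink at $w^\ast$ or the interior of its domain is degenerate; I would treat the boundary of the domain by continuity (extending $R_a^*$ to $-\infty$ outside) and the non-smooth case by using the general concave-duality statement cited from \cite{rockafellar2015convex}, rather than assuming differentiability.
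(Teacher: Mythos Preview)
Your proposal is correct and follows essentially the same route as the paper's proof: the two-stage decomposition into a welfare-constrained subgroup LP and an outer maximization over $w$, concavity of $R_a^*$ via convexity of the feasible set, the Lagrangian/super-gradient identification of $\lambda_a$, and the pointwise threshold structure from linearity in $c$. The only cosmetic difference is in the compactness step for existence---the paper works in $L^2$ and uses weak sequential compactness of bounded sets, whereas you use weak-$*$ compactness of the order interval in $L^\infty$; both arguments are sound and both ultimately rest on Banach--Alaoglu.
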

\begin{proofsketchof}{Proposition \ref{prop_optimal_GE}}
The revenue maximization over $c\in \GE(u)$ can be represented as a two-stage procedure. In the first stage, we find the revenue-maximizing classifier in each of the subgroups $\{A=a\}$ given the welfare level $w$; in the second stage, we optimize over $w$. The welfare constraint in  the first stage can be internalized using the Lagrangian approach; the corresponding Lagrange multipliers $\lambda_a$ are equal to the ``shadow prices'', i.e., the derivatives of the value functions $R_a^*(w)$   with respect to $w$. This internalization reduces finding the subgroup optimal classifier to the unconstrained problem; thus, the optimal classifier has a threshold structure similarly to $\UC$. This structure is inherited by~$c^*_{\GE(u)}$. Since the resulting linear program is infinite-dimensional and $R_a^*(w)$ may be non-differentiable, the formal proof requires some functional-analytic arguments presented in the appendix.
\end{proofsketchof}
For the special cases of Demographic Parity and Equal Opportunity, the explicit form of the optimal classifiers was obtained  by \citet{corbett2017algorithmic}. Their result becomes an immediate corollary of Proposition~\ref{prop_optimal_GE}; see~{\ifnum\Includeappendix=1{Subsection~\ref{subsect_ex_threshold_DP_EO} in the appendix}\else{the appendix}\fi}.

\begin{figure*}[t!]
    \centering
    \includegraphics[scale=.7]{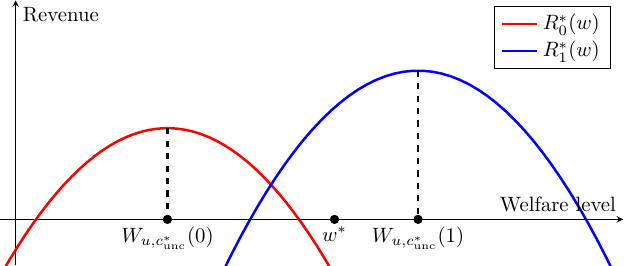}
    \caption{Intuition for the proof of Lemma~\ref{lm_no_mismatch}. The red curve illustrates the revenue $R_0^*(w)$ from the disadvantaged group $\{A=0\}$ at every possible welfare level, and the blue curve illustrates $R_1^*(w)$ from the advantaged group $\{A=1\}$. The subgroup revenues $R_a^*(w)$ are concave functions of $w$ that attain their maxima at a welfare level of $W_{u,\UC}(a)$. The total revenue is 
    $\P(A=0)R_0^*(w)+\P(A=1)R_1^*(w)$, which is maximized at $w^*=W_{u,c_{\GE(u)}^*}$. Noticeably, it always lies between the two maxima. In this illustration, $\P(A=0)=\frac{1}{3}$ and $\P(A=1)=\frac{2}{3}$.  }
    \label{fig:age}
\end{figure*}

\section{Mismatch of Fairness and Underlying Utilities}\label{sect_implications}
As we discussed in the introduction, the regulator aiming to equalize welfare among protected groups unavoidably assumes a certain approximation $u$ of the underlying utilities~$v$. For example, $u$ can be determined by the domain experts, while $v$ reflects the private tastes of the population and hence is not observable directly. We refer to the fact that $u$ is different from $v$ as a \emph{mismatch}. This section explores how imposing the Welfare-Equalizing fairness constraint with respect to $u$ affects the underlying welfare, which is measured by $v$.

We use the situation that exists before imposing the fairness constraint as the benchmark. We say that a group $\{A=a\}$ is $v$-\textit{disadvantaged} if under the bank-optimal unconstrained classifier, the welfare of $\{A=a\}$ is lower than the welfare of the other group $\{A=1-a\}.$
Formally, the group $\{A=a\}$ is $v$-disadvantaged if 
\[
W_{v,\UC}(a)< W_{v,\UC}(1-a),
\]
where $W$ is defined in Equation \eqref{eq_welfare_def} and $\UC$ is the bank-optimal unconstrained classifier from Subsection~\ref{subsect_optimal_GE}. When the underlying utility $v$ is clear from the context, we say that the group is disadvantaged and omit the dependence on $v$.

Ideally, imposing WE-fairness (or any other fairness constraint) should improve the welfare of the disadvantaged group. However, as we show next, this is not always the case. We say that a fairness constraint \emph{harms} the group $\{A=a\}$ if $W_{v,c^*}(a)< W_{v,\UC}(a)$,
where $c^*$ is the bank-optimal classifier after imposing that fairness constraint. Put differently, the fairness constraint harms the group $\{A=a\}$ if the welfare of the group (measured with respect to underlying utilities) decreases after imposing the fairness constraint.

\subsubsection{Harmful mismatch} 
We now demonstrate that the mismatch between assumed and underlying utilities can make the fairness constraint harmful to the disadvantaged group.
\begin{example}
Let the underlying utility-function be $ v(x,a,y)\equiv1$, i.e., all borrowers  equally benefit from receiving loans.
However, the regulator does not know the underlying utilities and decides to impose the fairness constraint of EO. Equivalently, the regulator assumes the utility $u(x,a,y)= y\cdot \beta_a$, for normalizing coefficients $\beta_a$ (as we explained in Subsection~\ref{subsec_WE_contains_DP_EO}). Notice that the underlying utility is the one associated with DP, and the regulator's assumed utility is the one associated with EO. Since $v\ne u$, there is a mismatch.

Suppose that $\X=\{0,1,2\}$, and all the combinations of $(x,a)$ have the same probability of $\frac{1}{6}$. Furthermore, assume that the fraction $p(x,a)$ of good borrowers is given by the table
\[
\begin{array} {c|ccc}
        & x=0 & x=1 & x=2 \\
        \hline
a=0 & 3/4 & 3/4 & 1/4   \\
a=1 & 1 & 0  & 0
\end{array}.
\]
In addition, let the revenue of the bank from a paid-back loan be {$\alpha_+(x)=1$,} and the loss from a borrower's default be {$\alpha_-(x)=2$} for every $x\in \X$.

We first want to determine which group is disadvantaged. The threshold for the bank-optimal unconstrained classifier $\UC$ equals {$t(x)= \frac{2}{3}$.} Hence, under $\UC$,  only borrowers with $x=0$ receive loans in the group $\{A=1\}$. In $\{A=0\}$, borrowers with $x\in\{0,1\}$ receive loans since in such cases {$t(x) = \frac{2}{3}<\frac{3}{4}=p(x,0)$;} however, loans are not given to borrowers with $x=2$ since {$t(2) = \frac{2}{3}>\frac{1}{4}=p(2,0)$.} Consequently, $\{A=1\}$ is disadvantaged: $W_{v,\UC}(0)=\frac{2}{3}$ compared to $W_{v,\UC}(1)=\frac{1}{3}$.

Next, let us examine how imposing the fairness constraint of EO changes the outcome of the bank-optimal classifier. The proportion of loans given by $\UC$ to good borrowers in $\{A=0\}$ is equal to the welfare of this group with respect to the assumed utility $u$, namely
\[
W_{u,\UC}(0)=\E\left[\UC(X,A)\mid Y=1,A=0\right]=\frac{6}{7}.
\]
In contrast, for $\{A=1\}$ we have
\[
W_{u,\UC}(1)=\E\left[\UC(X,A)\mid Y=1,A=1\right]=1.
\]
By imposing $u$-WE-fairness, the regulator requires the bank to equalize these two quantities. To do so, the bank-optimal constrained classifier can  either increase the amount of loans given to $\{A=0\}$ by approving some applications of $x=2$ or decrease the number of loans given to $\{A=1\}$. However, giving loans to  $x=2$ in $\{A=0\}$ is too costly: the cost $\frac{3}{4}\alpha_-  - \frac{1}{4}\alpha_+$ is not compensated by the benefit $1\cdot\alpha_+$ from giving the same amount of loans to good borrowers with  $x=0$ in  $\{A=1\}$. Therefore, the bank-optimal constrained classifier coincides with $\UC$ in $\{A=0\}$ and gives fewer loans  to $x=0$ in the group $\{A=1\}$: $c^*_{\GE( u)}(0,1)=\frac{6}{7}$. Consequently, the bank equalizes the proportion of loans given to good borrowers in both groups: $W_{ u,c^*_{\GE(u)}}(0)=W_{u,c^*_{\GE(u)}}(1)=\frac{6}{7}$. However, since the underlying utility is measured by $v$, not $u$, we get that $W_{v, c^*_{\GE(u)}}(1)=\frac{2}{7}<W_{v,\UC}(1)=\frac{1}{3}$ and the disadvantaged group is harmed by imposing the fairness constraint.
\end{example}

\subsection{Can WE-fairness help the disadvantaged group despite {the} mismatch?}\label{subsect_protects}
In this subsection, we examine when imposing WE-fairness could help the disadvantaged group, even in the presence of a mismatch.\footnote{We remind the reader that ``helping'' and  ``harming'' is always with respect to the actual underlying utility.} A natural observation is that if the assumed and underlying utilities ``almost'' match, the results of imposing $u$-WE-fairness or $v$-WE-fairness should be roughly the same. We postpone such quantitative statements to the end of this subsection and first address easy-to-check general conditions guaranteeing that fairness is not harmful. Theorem~\ref{th_robust} below shows that only a  lenient condition is required to assure that imposing WE-fairness benefits the disadvantaged group. 
%that as little as assuring that the assumed and the underlying welfare agree on the identity of the disadvantage group is enough to not harm it by imposing fairness.
\begin{theorem}\label{th_robust}
 If $v$ and $u$ agree on which group is disadvantaged, then the $u$-WE classifier weakly increases $v$-welfare of the disadvantaged group, i.e.,
{
\thinmuskip=0mu
\medmuskip=0mu plus 0mu minus 0mu
\thickmuskip=0mu plus 0mu
\begin{equation*}\small 
\begin{cases}
     W_{ u,\UC}(a)< W_{ u,\UC}(1-a)  \\
     W_{v,\UC}(a)< W_{v,\UC}(1-a) 
\end{cases}
\Longrightarrow W_{v,\UC}(a)\leq W_{v,c_{\GE( u)}^*}(a).  
\end{equation*}}
\end{theorem}
At first glance, this theorem may look rather intuitive. However, the claim is non-trivial even if there is no mismatch, i.e., when $v\equiv u$. To see this, recall that the WE-fairness constraint is imposed on the bank: a self-interested party, which is going to find the revenue-optimal way to satisfy the constraint. One  possible way to achieve welfare equality is to give no loans to both protected groups thus harming both of them. As we show in~{\ifnum\Includeappendix=1{Section~\ref{sect_unawareness_harms} in the appendix}\else{the appendix}\fi}, such an undesired behavior is possible when Unawareness is imposed. However, it never happens under the WE-fairness; we use this inherent property of WE-fairness as a tool to prove  Theorem~\ref{th_robust}. Momentarily, let us assume that there is no mismatch, i.e., that the assumed utility and the underlying one are exactly the same. In such a case, the following Lemma~\ref{lm_no_mismatch} suggests that not only imposing WE-fairness always improves the welfare of the disadvantaged group, but also that every \textit{individual} in the disadvantaged group is weakly better off.
%this can never happen for WE-fairness. This lemma can be seen as a version of Theorem~\ref{th_robust} for the special case of no mismatch; we need this lemma to prove the theorem.
\begin{lemma}[Matching utilities]\label{lm_no_mismatch}
The bank-optimal $u$-WE classifier makes the $u$-disadvantaged protected group $\{A=a\}$ weakly better off at the expense of the advantaged group. Formally,
\begin{equation*}
\thinmuskip=0mu
\medmuskip=1mu plus 1mu minus 1mu
\thickmuskip=1mu plus 1mu
\footnotesize{ W_{u,\UC}(a)< W_{u,\UC}(1-a)\Rightarrow
\begin{cases}
    W_{u,\UC}(a) \leq   W_{u,c_{\GE(u)}^*}(a)  \\
    W_{u,\UC}(1-a) \geq W_{u,c_{\GE(u)}^*}(1-a)
\end{cases}}.   
\end{equation*}
Moreover, any borrower from the $u$-disadvantaged group who receives a loan under the unconstrained classifier, receives it under the bank-optimal $u$-WE one.\footnote{While our paper is focused on group notions of fairness, we stress that this result provides stronger ``individual'' guarantees.}  Formally, for every $x\in\X$ it holds that
    \begin{equation*}
 W_{u,\UC}(a)< W_{u,\UC}(1-a)\Longrightarrow \UC(x,a)\leq   c_{\GE(u)}^*(x,a).
\end{equation*}
\end{lemma}
\begin{proofof}{Lemma~\ref{lm_no_mismatch}}
By Proposition~\ref{prop_optimal_GE}, the welfare level $w^*=W_{u,c_{\GE(u)}^*}(0)=W_{u,c_{\GE(u)}^*}(1)$ achieved by the $u$-WE classifier maximizes the revenue $\P(A=0)R_{0}^*(w)+ \P(A=1)R_{1}^*(w)$ as a function of welfare level $w$. The sub-group revenues $R_a^*(w)$, $a\in\{0,1\}$ are concave functions; thus the welfare level $w^*$ lies between their maxima. These maxima are attained at the welfare levels of the bank-optimal unconstrained classifier; therefore, the welfare level $w^*$ is between $W_{u,\UC}(a)$ and $W_{u,\UC}(1-a)$. See Figure~\ref{fig:age} for illustration.

The second part of the lemma, the individual guarantees, follow from the threshold structure of the bank-optimal constrained classifier $c_{\GE(u)}^*$,  which we identified in Proposition~\ref{prop_optimal_GE}. Since the welfare level $w^*$ is above the maximum of~$R_{a}^*(w)$ (for $\{A=a\}$ being the disadvantaged group), its super-gradient contains $\lambda_a\leq 0$; therefore, $\UC(x,a)$, which corresponds to $\lambda_a=0$, is below $   c_{\GE(u)}^*(x,a).$ 
\end{proofof}
Equipped with Lemma~\ref{lm_no_mismatch}, we are ready to prove Theorem~\ref{th_robust}.
\begin{proofof}{Theorem~\ref{th_robust}}
We apply Lemma~\ref{lm_no_mismatch} to the assumed utility function $u$. By the second part of the lemma, after imposing the ${u}$-WE constraint every borrower $x\in \X$ from the ${u}$-disadvantaged group $\{A=a\}$ who received loans under the unconstrained classifier still receives them. Namely,  
$\UC(x,a)\leq   c_{\GE({u})}^*(x,a)$ for all $x\in \X$. 
Multiplying both sides by the actual underlying utility $v(x,a,y)$, substituting $X=x$ and $Y=y$, and taking expectation, we get $W_{v,\UC}(a)< W_{v,c_{\GE({u})}^*}(a).$ 
In other words, the bank-optimal classifier $c_{\GE( {u})}^*$ for the utilities $ {u}$ assumed by the regulator improves the welfare of the group $\{A=a\}$ with respect to the underlying utilities~$v$. Since $v$ and $u$ agree on the disadvantaged group's identity, this classifier improves the disadvantaged group's underlying welfare.
\end{proofof}
In the presence of a mismatch, it is easy to see that a weak converse to Theorem~\ref{th_robust} holds, i.e., fairness always makes the disadvantaged group weakly worse off. To avoid a mismatch, one can use a simple quantitative criterion.
%The condition of no mismatch in Theorem~\ref{th_robust} is almost tight since, }
%After stating our main findings, we complement our qualitative observations with quantities ones.
We say that $u$ is an $\alpha$-approximation of $v$ for some $\alpha\geq 1$ if for all $x,a$, and $y$, 
$\frac{1}{\alpha}\leq \frac{v(x,a,y)}{{u}(x,a,y)} \leq \alpha.$ 
 Theorem~\ref{th_robust} implies the following quantitative criterion on how well the regulator should approximate the underlying utilities to help the disadvantaged group.
\begin{corollary}\label{cor_alpha_approximation}
If the utility ${u}$ assumed by the regulator is an $\alpha$-approximation of the underlying utility $v$ and the gap between the welfare of the groups with respect to ${u}$ is big enough, namely, $\frac{W_{u,\UC}(0)}{W_{u,\UC}(1)} \in \left(0,\frac{1}{\alpha^2}\right)\cup\left(\alpha^2,+\infty\right)$, 
then ${u}$-WE classifier helps the $v$-disadvantaged group.
\end{corollary}

\section{Computing Bank-Optimal Welfare-Equalizing Classifiers}\label{sect_computation}
In this section, we provide evidence for the applicability of WE-fairness by developing tools for computing bank-optimal WE classifiers. Our goal is to show how the bank can use the assumed utility proposed by the regulator to compute approximately optimal classifiers. We first discuss the case where the underlying and assumed utilities match (i.e., $v\equiv u$) and the other objects are fully known (revenues and losses $\alpha_\pm(x)$, and the probability $p(x,a)$ of paying back). Later on, we relax this assumption. Due to space considerations and our desire to focus on the conceptual assets of the paper, we defer most of the analysis to the appendix, as well as an elaborated version of formal statements.

Consider the case where $u\equiv v$, and both $\alpha_\pm$ and $p$ are known. If the data is tabular, i.e., $\X$ is relatively small (say several thousand different borrower types), we can compute the bank-optimal $u$-WE-classifier $c^*_{\GE(u)}$ explicitly by standard LP-methods. For large sets of attributes, e.g., multidimensional or continuous, the size of the LP ``explodes'', and a different approach should be taken. In this case, we use the structural insights from  Proposition~\ref{prop_optimal_GE}: the  bank-optimal WE-classifier $c_{\mathrm{WE}(u)}^*$ is parameterized by the two thresholds $\lambda_a$ for $a\in \{0,1\}$; therefore, to compute it we can restrict our attention to a finite-dimensional parametric family of classifiers. Due to the large-scale nature of the problem, we shall seek efficient algorithms for computing approximately bank-optimal WE classifiers, where these approximated solutions are defined as follows.
\begin{definition}\label{def:approx}
A classifier $c$ is $(\varepsilon,\varepsilon')$ bank-optimal $u$-WE if $R(c)\geq R(c^*_{\mathrm{WE}(u)})-\varepsilon$ and $|W_{u,c}(0)-W_{u,c}(1)|\leq \varepsilon'$. 
\end{definition}
Notice that such classifiers are doubly approximated: they approximate the revenue of the (exact) bank-optimal WE classifier, and also approximately equalize the welfare of the two classes. In the appendix, we demonstrate how the bank  can efficiently find a classifier that approximates the revenue and (exactly) equalizes the welfare.  Namely, we show how one can apply the ternary search method (as in~\citet{hardt2016equality}) to find $(\varepsilon,0)$ $u$-WE-classifier in $O\left(\log^2\left(\frac{1 }{\varepsilon}\right)\right)$ run-time.

Next, we get rid of the full information assumption. Recent papers \cite{balcan2019envy,hossaindesigning} propose convex relaxations for imposing fairness constraints in settings like ours, which includes generalization bounds. However, since an extensive body of literature deals with estimating real-valued functions (ranging from linear regression to deep learning), we take a different approach. We suggest that the bank employs the assumed utility given by the regulator, and describe its performance guarantees in terms of the ``quality'' of $u$. This perspective has been adopted recently for several other ML problems \cite{medina2017revenue,lykouris2018competitive,purohit2018improving}. 

For simplicity, we assume that  $|r(x,a)|$, $u(x,a,y)$, and $v(x,a,y)$ are all upper-bounded by $1$. 
\begin{proposition} \label{prop:sample_and_estimators} 
Fix a small $\delta>0$ and assume that the bank has access to a sample of $(X,A,Y,\alpha_\pm,v)$ and to estimators ${u}$ and $\hat{r}$ such that $\E\left[ \abs{{{u}}-v}\right]\leq \eta_u$ and $\E\left[\abs{\hat{r}-r}\right]\leq \eta_r$ for small enough $\eta_u$ and $\eta_r$. Then, a  $(\varepsilon,\varepsilon)$ bank-optimal  $v$-WE classifier with 
\[
\varepsilon=2\sqrt{6\left(\frac{1}{\P(A=0)}+\frac{1}{\P(A=1)}\right)\max\{\eta_u,\eta_r\}}
\]
can be computed with probability $1-\delta$ on a sample of size 
\[
O\left(\frac{1}{\max\{\eta_u,\eta_r\}}\left(\log\frac{1}{\delta}+\log\log\frac{1}{\max\{\eta_u,\eta_r\}}\right)\right).
\]
\end{proposition}

\section{Conclusions}
Our paper draws on the economic approach to  fair classification. It  initiates the discussion of the impact that the regulator's misconception about the population characteristics may have on the protected groups' well-being. Beyond that, we believe that our WE-fairness can serve as an anchor for grounding other fairness stances.

We have prioritized clarity over generality and focused on binary classification. Nevertheless, our results are more general.
The key technical Proposition~\ref{prop_optimal_GE} 
can be extended to multiple classes (e.g., several loans with different periods and interest rates).\footnote{As in the binary case, there is one threshold $\lambda_a$ per group $a$, but now the revenue $r$ and the utilities $\overline{u}$ depend on the class. The optimal WE-classifier selects a class $c$ with  the maximal $r-\lambda_a\cdot \overline{u}$.} All other results (the mismatch analysis and the algorithms) are, essentially, corollaries of Proposition~\ref{prop_optimal_GE} and extend straightforwardly. %Generalization to continuum of classes is also possible with the exception of algorithmic results that require new notion of approximation.
Noisy utilities and revenues can be assumed for free if we interpret $u$, $v$, and $r$ in all formulas as the conditional expectations for a given triplet $(X,A,Y)$. Allowing for utilities with negative expectations also does not alter the statements.
%In particular, we assume a selfish decision-maker and agents having a utility function, and design a fairness constraint under which the disadvantaged group is always better off. We sketched the structure of the optimal WE fairness for the bank, showing that our proposed family contains popular concepts as special cases. We also show that the WE fairness approached is robust to imperfect information about agent utility.
We see considerable scope for follow-up work. One prominent direction is to understand how the ``price of fairness'' is distributed among the parties, e.g., by how much the bank’s revenue and the advantaged group’s welfare drop.

%\nocite{*}
\addcontentsline{toc}{section}{\protect\numberline{}References}%

{\ifnum\Putacknowledgement=1{
\section*{Acknowledgements}

We are grateful to Lily Hu, Nikita Kalinin, Alexander Nesterov, Margarita Niyazova, and Ivan Susin for insightful discussions and pointers to the relevant literature. We also thank the anonymous reviewers for their helpful comments and clarifications and  Lillian Bluestein for proofreading.

The work of M. Tennenholtz is funded by the European Research Council (ERC) under the European Union's Horizon 2020 research and innovation programme (grant agreement n$\degree$  740435). F. Sandomirskiy is partially supported by the Lady Davis Foundation, by Grant 19-01-00762 of the Russian Foundation for Basic Research, and by Basic Research Program of the National Research University Higher School of Economics. This work was done while O. Ben-Porat and F. Sandomirskiy were at the Technion---Israel Institute of Science and were partially funded by the European Union's Horizon 2020 research and innovation programme (grant agreement n$\degree$  740435).
}\fi}

{\small
%\bibliographystyle{aaai}
%\bibliography{fairML}

}

{\ifnum\Includeappendix=1{ %starting appendices
\newpage
\onecolumn
\appendix

\section{\fed Threshold Structure of the Bank-optimal WE-classifier}

\begin{proofof}{Proposition \ref{prop_optimal_GE}}
The revenue maximization under the WE constraint can be split into two subsequent maximization problems:
$$\max_{c\in \GE(u)} R(c)=\max_{w\in\R}\max_{{\begin{array}{c} \scriptstyle{c: \, \X\times A\to[0,1]}\\ \scriptstyle{W_{u,c}(0)=W_{u,c}(1)=w}  \end{array}}} \P(A=0)R_{0}(c)+\P(A=1)R_{1}(c).$$
First, the bank finds the revenue-maximizing classifier $c_{a,w}^*: \X\to[0,1]$ that maximizes the revenue $R_a(c)=\E[r(X,A)\cdot c(X)\mid A=a]$ in the subgroup $\{A=a\}$ given some welfare-level $W_{u,c}(a)=w$ {\fed (we refer to these classifiers as optimal marginal classifiers).} Then, the bank finds the {\fed bank-optimal} level $w^*$ of $w$ by maximizing the total revenue $\P(A=0)R_{0}^*(w)+\P(A=1)R_{1}^*(w)$, where as in the statement of the proposition, $R_{a}^*(w)$ denotes $R_a(c_{a,w}^*)$.

Thus, the {\fed bank-optimal} WE classifier $c^*_{\GE(u)}(x,a)$ equals $c_{a,w^*}^*(x)$ provided that the optimization problems for $c_{a,w}^*$ and $w^*$ have a solution. The threshold representation in Equation \eqref{eq_optGE_classifier} will follow from a similar representation {\fed for the optimal marginal classifiers} $c_{a,w}^*$. Existence and the threshold representation of $c_{a,w^*}^*$ is the subject of the next two subsections.

\subsection{Existence of \texorpdfstring{$c_{a,w}^*$ and $w^*$, concavity of $R_{a}^*(w)$}{{\fed bank-optimal} classifier}}
For a given welfare-level $w$ of a group $\{A=a\}$, the set of feasible {\fed marginal} classifiers $F(w)=\{c: \X\to[0,1]\mid W_{u,c}(a)=w\}$ is non-empty if and only if  $w\in [0,\E[u\mid A=a]]$.
Indeed, for any $c$ we have $W_{u,c}(a)\in [0,\E[u\mid A=a]]$; therefore, the set $F(w)$ is empty outside of this interval. For any $w$ inside, the constant classifier $\frac{w}{\E[u\mid A=a]}\in F(w)$ and thus $F(w)$ is non-empty. 
Therefore, for $w\in [0,\E[u_+\mid A=a]]$,  $R_{a}^*(w)=\sup_{c\in F(w)} R_a(c)$ is finite; outside this interval we assume $R_{a}^*(w)=-\infty$. Let us show that $R_{a}^*(w)$ is concave and continuous  and that supremum is, in fact, maximum, i.e., that the {\fed bank-optimal} (marginal) classifier $c_{a,w}^*$ exists.

For any $c'\in F(w') $ and $c''\in F(w'')$ the convex combination $c=\beta c'+(1-\beta)c''$ for $\beta\in[0,1]$ belongs to $F(w)$, where $w=\beta w'+(1-\beta )w''$. 
Therefore,
$R_{a}^*(w)\geq \beta R_a(c')+(1-\beta)R_a(c'')$. Taking supremum over $c'\in F(w')$ and $c''\in F(w'')$ we obtain $R_{a}^*(w)\geq \beta R_{a}^*(w')+(1-\beta)R_{a}^*(w'')$. Thus $R_{a}^*(w)$ is concave in $w$.

Next, we prove that the maximum is attained. Consider $c^{(n)}\in F(w), n=1,2,...$, a sequence of classifiers such that $R_{a}(c^{(n)})\to R^*_{a}(w)$. This sequence is a bounded set in the Hilbert space $L^2(\X,\P\vert_{(X\mid A=a)})$; thus, by the Banach-Alaoglu theorem  \cite[Section 5]{john1985course}, this sequence contains a weakly-convergent subsequence $c^{(n_k)}\to c\in F(w)$. Since $R_a(c^{n_k})$ is a scalar product of $c$ and $r(\,\cdot\, ,a)$ in $L^2(\X,\P\vert_{(X\mid A=a)})$, we get $R_a(c)=\lim_{k\to\infty} R_a(c^{n_k})=R_{a}^*(w)$. Thus the maximum is attained.

A similar argument proves continuity of $R_{a}^*(w)$. By concavity, continuity on $[0,\E[u\mid A=a]]$ follows from upper-semi-continuity:  $R_{a}^*(w)\geq \lim_{n\to\infty} R^*_{a}(w^{(n)})$ for any sequence $w^{(n)}\to w, \ n\to\infty.$ Consider the sequence of {\fed bank-optimal} classifiers $c^*_{a,w^{(n)}}, \ n=1,2,\dots$. There is a weakly-convergent subsequence $c^*_{a,w^{(n_k)}}$ with some $c$ as the weak limit. We have $R_a(c)=\lim_{k\to\infty} R^*_{a}(w^{(n_k)})$ and similarly  $W_{u,c}(a)=\lim_{k\to \infty} w^{(n_k)}=w$. Thus $c\in F(w)$ and $R_{a}^*(w)\geq \lim_{n\to\infty} R^*_{a}(w^{(n)})$.

The existence of the {\fed bank-optimal} $w^*$, where
\[
w^* \in \arg\max_w \{ \P(A=0)R_{0}^*(w)+\P(A=0)R_{1}^*(w) \}
\]
follows from the continuity of $R^*_{a}(w)$ for  $w\in [0,\E[u\mid A=a]]$. 

\subsection{Threshold structure of the {\fed bank-optimal} marginal classifiers \texorpdfstring{$c_{a,w^*}^*$}{}}
By concavity and continuity of $R^*_{a}(w)$, the super-gradient at $w^*$ is non-empty. Pick an element $\lambda_a$ from it.

By the definition of super-gradient, for any $w\in\R$ it holds that 
$R^*_{a}(w)\leq R^*_{a}(w^*)+\lambda_a(w-w^*)$. Equivalently, for the {\fed bank-optimal} classifier $c_{a,w^*}^*\in F(w^*)$ and an arbitrary classifier $c : \  \X\to[0,1]$ we have
$$ R_{a}(c)-\lambda_a W_{u,c}(a)\leq R_{a}(c_{a,w^*}^*)-\lambda_a W_{u,c_{a,w^*}^*}(a).$$
In other words, the {\fed bank-optimal} classifier $c_{a,w^*}^*$ maximizes $R_{a}(c)-\lambda_a W_{u,c}(a)$ over all 
$c : \  \X\to[0,1]$. The converse is also true: any maximizer $c$ gives a {\fed bank-optimal} classifier $c_{a,w^*}$ provided that it belongs to $F(w^*)$.

Since
$$R_{a}(c)-\lambda_a W_{u,c}(a)= \E\left[\left(r(X,A)-\lambda_a\overline{u}(X,A) \right)c(X,A) \mid A=a   \right],$$
the maximizer equals $1$ when $r(x,a)-\lambda_a\overline{u}(x,a)>0$ and equals $0$ if the inequality has the strict opposite sign. The condition $c\in F(w^*)$
imposes the constraint on otherwise arbitrary values of $c(x,a)=\tau_a(x)$ for $x$ with $r(x,a)-\lambda_a\overline{u}(x,a)=0$.
\end{proofof}

\subsection{{\fed Example: the bank-optimal classifiers for DP and EO}}\label{subsect_ex_threshold_DP_EO}

Proposition~\ref{prop_optimal_GE} immediately provides the {\fed bank-optimal} classifiers under DP and EO: for the former, it is given by the ``additive perturbation'' of the optimal unconstrained classifier, while the latter is obtained by the ``multiplicative perturbation.'' 

 More precisely, {\fed for the optimal classifier $c_\textnormal{parity}^*$ satisfying DP we recover the result of \citet{corbett2017algorithmic}:} there exist group-dependent constants $\lambda_a\in \R$, $a\in\{0,1\}$ such that
\[
c_\textnormal{parity}(x,a)=
\begin{cases}
1     &  p(x,a)> t(x,a)+\frac{\lambda_a}{\alpha_+(x)+\alpha_-(x)} \\
0     &  p(x,a)< t(x,a)+\frac{\lambda_a}{\alpha_+(x)+\alpha_-(x)}
\end{cases},
\]
{\fed where $p(x,a)$ is the probability that a borrower with attributes $X=x$ and $A=a$ pays back, and the threshold $t(x)$ is defined by~\eqref{eq_t(x)}.}

 For the {\fed bank-optimal} EO classifier $c_\textnormal{opportunity}^*$, there exist $\lambda_a\in \R$, $a\in\{0,1\}$ such that
{
\thinmuskip=0mu
\medmuskip=1mu plus 1mu minus 1mu
\thickmuskip=2mu plus 2mu
\[
c_\textnormal{opportunity}^*(x,a)=
\begin{cases}
1     &  p(x,a)\left(1-\frac{\lambda_a}{\alpha_+(x)+\alpha_-(x)}\right)> t(x,a)\\
0     &  p(x,a)\left(1-\frac{\lambda_a}{\alpha_+(x)+\alpha_-(x)}\right)< t(x,a)
\end{cases}.
\]
}

%\section{Worst Cases for Popular Fairness Concepts from Section \ref{sect_implications}}

\section{Unconstrained bank-optimal classifier is unfair}
{\fed Here we show that if the bank is free to pick any classifier (i.e., the regulator imposes no constraints), the  bank-optimal \textit{unconstrained} classifier (formally defined in Subsection~\ref{subsect_optimal_GE}) can discriminate against one of the groups even if \textit{the groups are of equal size and contain the same fraction of good borrowers}. {\fed We illustrate this claim by the following example.}}
\begin{example}[Bank-optimal unconstrained classifier is unfair]\label{ex_unconstrained}
Let $A$ and $X$ be binary, and let all four combinations {\fed have the probability $\frac{1}{4}$.} The probability $p(x,a)$ of being a good borrower is given by the matrix
$$\small \begin{array}{c|cc}
        & x=0 & x=1 \\
        \hline
a=0 & 0.4 & 0.6\\
a=1 & 0 & 1
\end{array}.$$
In the group $\{A=0\}$, the attribute $X$ poorly separates good and bad borrowers, while it is a perfect predictor of creditworthiness for $\{A=1\}$. If losses from a defaulting client are equal to the revenue from two borrowers paying back, e.g. $\alpha_-(x) = 2 \cdot \alpha_+(x)$, then {\fed the revenue-maximizing classifier has the following form: the bank gives the loans to those borrowers that have the probability of paying back above the threshold $t=\frac{\alpha_-(x)}{\alpha_-(x)+\alpha_+(x)}\equiv\frac{2}{3}\simeq 0.66$.}
 Thus, the bank gives loans to applicants with $(X,A)=(1,1)$ only. As a result, \emph{no loans are given to the group $\{A=0\}$}, although the prior distribution shows that in total $1/2$ of borrowers from each of the groups $\{A=0\}$ and $\{A=1\}$ are good. {\fed This contradicts the intuitive understanding of fairness. Moreover, \emph{for any utility function $u$,} the welfare of the group  $\{A=0\}$ is below the welfare of the group $\{A=1\}$, i.e., the optimal unconstrained classifier is not  $u$-Welfare-Equalizing for any $u$.}
\end{example}

\section{{\fed Unawareness Can Harm Both Groups and the Bank}}\label{sect_unawareness_harms}

%One of the most intuitive concepts is \emph{Unawareness}: to deliver a ``fair'' outcome to both groups $A=0$ and $A=1$, the classifier $c$ must ignore the protected attribute $A$. Formally, a classifier $c$ satisfies unawareness if $c(x,0)=c(x,1)$ for all $x\in \X$. 

\emph{Unawareness} is perhaps the most intuitive fairness criterion. A classifier $c$ satisfies unawareness if $c(x,0)=c(x,1)$ for all $x\in \X$. Informally, to deliver a ``fair'' outcome to both groups $A=0$ and $A=1$, the classifier $c$ must ignore the protected attribute $A$. 

This natural idea is not innocuous: $X$ and $A$ can be dependent and thus information contained in $X$ can be used as a proxy for $A$, making the disadvantaged group even worse off; see~\citet{doleac2016does}. Despite its flaws, unawareness is promoted by layers in the labor market, insurance market\footnote{By the decision of the Court of Justice of the European Union, insurance premiums must be determined in a gender-blind way starting from December 2012. This initiative eventually increased the gap between premiums paid by females and males~\cite{collinson2017}.}, and also by GDPR.\footnote{The General Data Protection Regulation is a law adopted by the European Union in 2016 that regulates data protection and privacy.}

The {\fed bank-optimal} unaware classifier $c_\text{unaware}^*$ can be constructed by the same logic as the optimal unconstrained one; see Subsection~\ref{subsect_GE} and \citet{corbett2017algorithmic}: an applicant receives money if the probability $\P(Y=1\mid X=x)$ of being a good borrower given the observed attribute $X=x$ is above the threshold $t(x)=\frac{\alpha_-(x)}{\alpha_+(x)+\alpha_-(x)}$. Formally, $c_\text{unaware}^*(x,a)=c_\text{unaware}^*(x)=1$ for borrowers $x$ such that $\overline{p}(x)=\P(A=0)p(x,0)+\P(A=1)p(x,1)\geq t(x)$ and $c_\text{unaware}^*(x)=0$ if $\overline{p}(x)< t(x)$.

We now show that {Unawareness} can make all three parties strictly worse off: both groups and the bank, \textit{regardless} of the underlying utility function $v$. The following example shows  this phenomenon when information losses caused by Unawareness are significant: the interpretation  of the non-protected attribute $X$ depends on the protected\footnote{For example, having children correlates with spending more time at work for men and has negative correlation for women~\cite{parker2013modern}.} attribute $A$ and without knowing $A$ the classifier cannot achieve good separation of good and bad borrowers; thus giving loans becomes too risky. 

\begin{example}
 Suppose  that $\X=\{0,1\}$ and all the four combinations of attributes $(x,a)$ have an equal probability of $\frac{1}{4}$. The fraction $p(x,a)$ of good borrowers is given by the following table
$$\begin{array} {c|cc}
        & x=0 & x=1 \\
\hline
a=0 & 2/3 & 1/3\\
a=1 & 1/3 & 2/3
\end{array}. $$
Notice that the fraction of good and bad borrowers is the same in both groups. Further, $X=1$ is a positive signal about the quality of a borrower in the group $\{A=0\}$ and a negative one for $\{A=1\}$. 

The {\fed bank-optimal} unconstrained classifier $\UC$ gives loans to agents with $p(x,a)\geq t(x)$; thus, if the threshold $t(x)$ is between $\frac{1}{3}$ and $\frac{2}{3}$ (for example, $t(x)=\frac{3}{5}$ if the revenue $\alpha_+$ from giving money to a good borrower equals $\frac{2}{3}$ of the losses $\alpha_-$ from a bad one), then agents with $(X,A)$ equal to $(0,0)$ or $(1,1)$ get loans under $\UC$. So one half of the members in each group receives loans and the bank gets a positive revenue.

After imposing {Unawareness}, the {\fed bank-optimal} classifier $c_\text{unaware}^*$ compares the threshold $t(x)$ with the average fraction of good borrowers for a given $X=x$, namely $\overline{p}(x)=\P(A=0)p(x,0)+\P(A=1)p(x,1)$. In our example, $\overline{p}(x)= \frac{1}{2}$ for every $x$, and thus no loans are given for $t(x)>1/2$. Thus, if $t(x)\in(1/2, 2/3)$ for all $x\in \X$, Unawareness pushes the welfare of both groups to zero for any underlying utility $v$ as well as the bank's revenue.
\end{example}

\section{Proof of Corollary~\ref{cor_alpha_approximation}}
\begin{proofof}{Corollary~\ref{cor_alpha_approximation}}
By Theorem~\ref{th_robust}, it is enough to show that, ${u}$ and $v$ agree on which group is disadvantaged. Without loss of generality, the group $\{A=0\}$ is $u$-disadvantaged. We shall show that $\{A=0\}$ is also $v$-disadvantaged. We get the following chain of inequalities: 
$$W_{v,\UC}(0)\leq \alpha W_{ u,\UC}(0)< \frac{1}{\alpha}W_{u,\UC}(1)\leq W_{v,\UC}(1).$$
The first and the last steps follow from the definition of $\alpha$-approximation, and the second step from the condition in the corollary. Consequently, 
$\{A=0\}$ is $v$-disadvantaged, which completes the proof.
\end{proofof}

\section{Algorithms for  Approximately Bank-optimal WE-Classifiers}\label{sect_computing_WE_appendix}

{\fed This section is devoted to finding approximately bank-optimal WE-classifiers. It contains a formal version of the results discussed in Section~\ref{sect_computation} together with their proofs.}

We first assume complete information about the problem {(\fed the utilities $u\equiv v$,} revenues/losses $\alpha_\pm$, and the probability distribution $\mathbb P$ are known), and then relax this assumption. 

{\fed Throughout this section we assume that the {\fed bank-optimal} unconstrained classifier provides non-zero utility to both groups. This positivity condition allows us to avoid degenerate cases. }
%{\fed To ease readability, we also assume that $\alpha_+(x,a),$ $\alpha_-(x,a)$, $u(x,a,u)$, and $v(x,a,y)$ all upper-bounded by $1$.}

\subsection{Complete information}
{\fed In this subsection we assume that the underlying utility $v$ coincides with the utility $u$ assumed by the regulator. Furthermore, these utilities are known to the bank as well as} the distribution $\P$ and functions  $\alpha_+,\alpha_-$. In particular, the bank can compute $r(x,a)=\E[\alpha_+ Y- \alpha_-(1-Y)\mid X=x, A=a]$   and $\overline{u}(x,a)=\E[u(X,A,Y)\mid X=x, A=a]$, namely, the average revenue and the average utility for a given $(x,a)$ pair.
{\fed
\subsubsection{LP-based approach}  As we mentioned in Section~\ref{sect_computation},} the {\fed bank-optimal} $u$-WE-classifier is the solution to the linear program (LP) given in  Figure~\ref{fig:LP finite}. It maximizes the revenue under the constraint of equal welfare, and hence for a medium size set of attributes $\X$ (say several thousands) we can compute $c^*_{\GE(u)}$ explicitly by standard LP-methods.

\begin{figure}[t]
\begin{align}
&\text{maximize} \displaystyle\sum_{x,a} \P(X=x,A=a)\cdot r(x,a)\cdot c(x,a) \nonumber \\
&\text{subject to} \scriptsize \sum_{x} \P(X=x\mid A=0) \cdot \overline{u}(x,0)\cdot c(x,0)  \nonumber\\
&\qquad \qquad \quad = \sum_{x}\P(X=x\mid A=1) \cdot \overline{u}(x,1)\cdot c(x,1)  \nonumber \\
& \quad \qquad \qquad 0\leq c(x,a)\leq 1, x\in \X,a\in A  \nonumber
\end{align}
\caption{Linear programming formulation. The objective function of the LP is $R(c)$, the bank's revenue, which is maximized over the class of functions $\{\X \times A \rightarrow [0,1]\}$. The first constraint is welfare-equalizing (see Equation \eqref{eq: u-WE}); the welfare of the group $\{A=0\}$ should be equal to the welfare of  $\{A=1\}$. The remaining constraints are range constraints, assuring that the classifier is in segment $[0,1]$.\label{fig:LP finite} }
\end{figure}
\subsubsection{Threshold-based approach}
{\fed For large sets of attributes $\X$, e.g., multidimensional or continuous, solving the LP in Figure~\ref{fig:LP finite} is no longer feasible in practice. In this case, we harness the threshold structure of the bank-optimal classifier (see Proposition~\ref{prop_optimal_GE}) and use binary search over the thresholds $\lambda_a$ to compute approximately optimal solutions. The approximation notion is captured by Definition~\ref{def:approx}.}

%We assume throughout this section that  $0<W_{u,\UC}(0)<W_{u,\UC}(1)$, i.e., the {\fed bank-optimal} unconstrained classifier provides non-zero utility to both groups and the group $\{A=0\}$ is disadvantaged. The positivity condition allows to avoid degenerate cases. To ease readability, we also assume that $\max_{x,a} |r(x,a)|=1$ and  $\max_{x,a} \overline{u}(x,a)=1$. 
%We assume throughout this subsection that  $0<W_{u,\UC}(0)<W_{u,\UC}(1)$, i.e., the {\fed bank-optimal} unconstrained classifier provides non-zero utility to both groups and the group $\{A=0\}$ is disadvantaged. The positivity condition allows to avoid degenerate cases.

{\fed The next proposition shows that the bank can efficiently compute a classifier that approximates the optimal revenue and achieves an exact equality of welfare. To simplify the presentation, in this subsection we assume that the group $\{A=0\}$ is disadvantaged.}
\begin{proposition}\label{prop_omniscient_algorithm}
Assume that $\alpha_+, \alpha_-$, and $u$ are known.  Further, assume that the bank has access to an oracle that computes expectations w.r.t. $\P$ in constant time. Then, for any positive $\varepsilon$ such that
\begin{equation}\label{eq_varepsilon}
 \varepsilon<\frac{W_{u,\UC}(0)}{6\max_{x,a} \overline{u}(x,a)}\min\left\{\frac{\P(A=0)}{\P(A=1)},\frac{\P(A=1)}{\P(A=0)}\right\} 
\end{equation}
an  ($\varepsilon\cdot \max_{x,a} |r(x,a)|$,\ 0) {\fed bank-optimal} $u$-WE-classifier can be computed in $O\left(\log^2\left(\frac{1 }{\varepsilon}\right)\right)$, uniformly over all other parameters.
\end{proposition}
{\fed As we show below, the assumption on the expectation oracle can be relaxed by sampling, which leads to an approximate  welfare equality.}
\begin{proofof}{Proposition \ref{prop_omniscient_algorithm}}

{\fed The high-level structure of the proof is as follows.} Similarly to the proof of Proposition~\ref{prop_optimal_GE}, we represent the optimization problem as finding an {\fed approximately bank-optimal} marginal classifier for a given welfare level $w$ in each subgroup, and then determining the revenue-maximizing $w$.
For a given welfare level $w$, the revenue-maximizing classifiers and the revenue $R^*(w)$ itself can be approximated by the binary search over thresholds $\lambda_a$. Since $R^*$ is a concave function,  the revenue-maximizing~$w$ can be determined by the Golden ratio or the ternary search as in~\citet{hardt2016equality}.
The technical difficulty is to bound the number of steps needed for the binary search to converge even though a small change in $\lambda_a$ can result in a massive change both in welfare and revenue.

{\fed Now we discuss all the steps in detail.
To simplify formulas,} we can assume w.l.o.g. that $\max_{x,a} \overline{u}(x,a)=\max_{x,a} |r(x,a)|= 1$ by changing the scale.

Proposition~\ref{lm_approx_marginal} below guarantees that for each subgroup $\{A=a\}$ and welfare level $w$, a marginal classifier with welfare $w$  and revenue at least $R_a^*(w)-\frac{\varepsilon}{2}$ can be computed in $O\left(\log\left(\frac{1}{\varepsilon}\right)\right)$ time by binary search over thresholds $\lambda_a$. We verify below that the upper bound on $\varepsilon$ ensures that none of $\{\lambda_0,\lambda_1\}$  exceeds $\frac{2}{\varepsilon}$  (a technical condition of Proposition~\ref{lm_approx_marginal}).

Therefore, for a given $w$ we can also compute the optimal total revenue $R^*(w)=\P(A=0)R_0^*(w)+\P(A=1)R_1^*(w)$ up to $\frac{\varepsilon}{2}$ in $O\left(\log\left(\frac{1}{\varepsilon}\right)\right)$. The total revenue $R^*$ is a continuous concave function of the welfare level $w$ (see the proof of Proposition~\ref{prop_optimal_GE}) and thus it can be approximately maximized by the Golden-ratio search or by the ternary search as in~\citet{hardt2016equality}.

For the moment, assume that we can compute $R^*(w)$ exactly. Then 
  $O\left(\log\frac{1}{\varepsilon}\right)$ iterations of the search will find  $w$ such that $R^*(w)\geq R^*(w^*)-\frac{\varepsilon}{2}$, where $w^*$ is the revenue-maximizing welfare level. Indeed,
the function $R^*(w)$ is defined on the interval $0\leq w\leq b$, where  $b=\min\{\E[u\vert A=0],\, \E[u\vert A=1]\}$, and takes values within $[-1,1]$ by the assumption on the magnitude of $r$. On any interval $[w,w']$ with $ |w'-w|\leq \frac{\varepsilon}{4}\cdot b$  containing the {\fed bank-optimal} welfare level $w^*$, the value $R^*(w^*)$ is within $\frac{\varepsilon}{2}$ of $\max\{R^*(w),R^*(w')\}$ (otherwise, by convexity, there exists $w''\in[0,b]$ such that $R^*(w'')<-1$); thus, $O\left(\log\frac{1}{\varepsilon}\right)$ steps of the search are enough.

If, along the way, the values of $R^*$ are computed with precision~$\frac{\varepsilon}{2}$, the output  $w$  of the Golden-ratio search satisfies $R^*(w)\geq R^*(w^*)-{\varepsilon}$ and, therefore, provides an {\fed $\varepsilon$-bank-optimal} WE-classifier. Since all $O\left(\log\frac{1}{\varepsilon}\right)$ iterations of the search algorithm involve computing $R^*$ at a new point, the total run-time is $O\left(\log\frac{1}{\varepsilon}\cdot \log\frac{1}{\varepsilon}\right)$.

In order to satisfy the technical condition of Proposition~\ref{lm_approx_marginal}, we need one more twist: the  search should start on the smaller interval
$$I=\Big[\max_{a\in\{0,1\}} w_\frac{2}{\varepsilon}(a), \, \min_{a\in\{0,1\}} w_{-\frac{2}{\varepsilon}}(a)\Big]\cap [0,b],$$
where $w_\lambda(a)$ denotes a point $w$ such that the super-gradient of $R^*_a(w)$ contains $\lambda$ (a point $w_\lambda(a)$ can be easily  computed as in the proof of Proposition~\ref{lm_approx_marginal}). By the definition of $I$, any $w\in I$ satisfies the condition of Proposition~\ref{lm_approx_marginal} and it remains to check that the Golden-ratio search, restricted to $I$, still approximates the optimum $w^*$. Put differently, it is enough to show that $w^*\in I$. At $w^*$, First-order conditions imply $0\in \partial R^*(w^*)=\P(A=0)\partial R^*_0(w^*)+\P(A=1)\partial R^*_1(w^*)$. Therefore, there exist $\lambda_a\in \partial R^*_a(w^*)$ such that $\P(A=0)|\lambda_0|=\P(A=1)|\lambda_1|$. Pick $\lambda_a$ that is non-negative. Then $\lambda_a\cdot w^*\leq 2$; otherwise, $R^*_a(0)\leq R^*_a(w^*)+\lambda_a (0-w^*)<-1$ which contradicts the normalization of $r$. By Lemma~\ref{lm_no_mismatch}, $w^*\geq W_{u,\UC}(0)$; hence, $\lambda_a\leq \frac{2}{W_{u,\UC}(0)}$. The upper bound on $\varepsilon$ implies that both $|\lambda_0|$ and $|\lambda_1|$ are bounded by $\frac{1}{3\varepsilon}$ and thus $w^*$ belongs to~$I$.
\end{proofof}        

The next proposition is an auxiliary technical result used in the proof of Proposition \ref{prop_omniscient_algorithm}. It shows that approximately {\fed bank-optimal} marginal classifier for a group $\{A=a\}$ with a given welfare level $w$ can be found by binary search over thresholds. The threshold $\lambda_a$ can also be thought as a super-gradient of the marginal revenue $R^*_a$. The super-gradient of a concave function may ``explode'' if the point is close to the boundary. This is why we need an additional technical assumption that the super-gradient of $R^*_a$ at $w$ is not too big. 
\begin{proposition}\label{lm_approx_marginal}
Under the assumptions of Proposition~\ref{prop_omniscient_algorithm}, for a given $a\in\{0,1\}$,  $\varepsilon>0$, and $w\in \big[0,\, \E[u\vert A=a]\big]$ such that the super-gradient $\partial R^*_a(w)$ contains $\lambda$ with $|\lambda|<\frac{1}{\varepsilon}$, a marginal classifier for a subgroup $\{A=a\}$  with welfare level $w$ and revenue of at least $R_a^*(w)-\varepsilon\cdot \max_{x,a} |r(x,a)|$ can be computed in $O\left(\log\left(\frac{1}{\varepsilon}\right)\right)$.
\end{proposition}
\begin{proofof}{Proposition \ref{lm_approx_marginal}} Similarly to the proof of Proposition~\ref{prop_omniscient_algorithm}, we can assume w.l.o.g. that $\max_{x,a} |r(x,a)|= 1$.  Consider the threshold classifier $c_\lambda$ such that $c_\lambda(x)=1$ if  $r(x,a)\geq \lambda \overline{u}(x,a)$ and equals zero otherwise. For any $\lambda\in \R$, this classifier generates the optimal subgroup revenue among all the classifiers having the same welfare level $w_\lambda(a)=\E[\overline{u}(X,A)c_\lambda(X)\mid A=a]$, i.e., $R_a(c_\lambda)$ equals $R_a^*(w_\lambda)$; moreover, $\lambda$ belongs to the super-gradient of the concave function $R^*_a$ at $w_\lambda$ (see the proof of Proposition~\ref{prop_optimal_GE}). 

Note that $w_\lambda$ is a decreasing function of $\lambda$ and thus we can use binary search in order to find a $\lambda$ such that $w_\lambda$ is close to $w$. By the assumption on $w$ we can restrict the search to $\lambda\in \left[-\frac{1}{\varepsilon},\frac{1}{\varepsilon}\right]$ and thus after $O\left(\log\frac{1}{\varepsilon}\right)$ steps find $\lambda$ and $\lambda'$ such that $w_\lambda\leq w\leq w_{\lambda'}$ and $|\lambda-\lambda'|\leq \varepsilon$. We can treat $w$ as the convex combination $\theta w_\lambda+(1-\theta)w_{\lambda'}$, $\theta\in[0,1]$, and consider a classifier $c=\theta c_\lambda+(1-\theta)c_{\lambda'}$. By the construction, $c$ has the right welfare level of $w$. 
 
Let us check that $R_a(c)\geq R^*_a(w)-\varepsilon$.  For any concave function $h=h(z)$, $z_1,z_2\in \R$, and $\theta\in[0,1]$, it holds that
\[
\theta h(z_1)+(1-\theta)h(z_2)\geq h(\theta z_1+(1-\theta)z_2)-|z_1-z_2|\cdot|g_1-g_2|, \]
where $g_k$ is any super-gradient of $h$ at $z_k$. The revenue of $c$ is given by the convex combination of values $\theta R^*_a(w_\lambda)+(1-\theta)R^*(w_{\lambda'})$, while $R^*_a(w)$ is the value at the convex combination $w=\theta w_\lambda+(1-\theta)w_{\lambda'}$.
 Applying the above inequality, we get 
 $R_a(c)\geq R^*_a(w)-|w_\lambda-w_{\lambda'}|\cdot|\lambda-\lambda'|$; hence, $R_a(c)\geq R^*_a(w)-\varepsilon$.
\end{proofof}

\subsection{Incomplete information}\label{subsect_appp_com}

The assumption of expectation oracle from Proposition~\ref{prop_omniscient_algorithm}, as well as the exact knowledge of revenues/losses $\alpha_+$ and $\alpha_-$ can be relaxed. {\fed Moreover, in this subsection we allow for a mismatch between the unknown underlying utility function $v$ and the known utility $u$ assumed by the regulator. We assume that the mismatch is small and interpret $u$ as the estimator of $v$.}
 
The following proposition mirrors Proposition \ref{prop:sample_and_estimators}.
\begin{proposition}[Full version of Proposition~\ref{prop:sample_and_estimators}]\label{prop:sample_and_estimators_appendix} 
Fix $\delta>0$ and assume that the bank has access to a sample of $(X,A,Y,\alpha_\pm,v)$ and to estimators {\fed $u={{u}}(x,a)$ of $v$ and $\hat{r}=\hat{r}(x,a)$ of $r$ such that $\E\left[ \abs{{{\overline u}}-\overline v}\right]\leq \eta_u \cdot \max_{x,a} \overline{v}(x,a)$} and $\E\left[\abs{\hat{r}-r}\right]\leq \eta_r\cdot \max_{x,a} |{r}(x,a)|$. Let $\varepsilon$ be such that $\varepsilon>2\sqrt{6\left(\frac{1}{\P(A=0)}+\frac{1}{\P(A=1)}\max\{\eta_u,\eta_r\}\right)}$ that also satisfies the upper bound in Inequality~\eqref{eq_varepsilon}. Then, an $\Big(\varepsilon\cdot \max_{x,a} |{r}(x,a)|,\ \ \varepsilon\cdot \max_{x,a} \overline{v}(x,a)\Big)$ {\fed bank-optimal $v$-WE} classifier
can be computed with probability of at least $1-\delta$ on a sample of size 
$O\left(\frac{1}{\varepsilon^2}\left(\log\frac{1}{\delta}+\log\log\frac{1}{\varepsilon}\right)\right)$.
\end{proposition}

The proof of the proposition is deferred to the end of the subsection, and it relies on three auxiliary propositions. First, in Proposition~\ref{lm_sample_complexity} we prove that having a sample access to the distribution is enough for constructing an $(\varepsilon,\varepsilon)$ {\fed bank-optimal} $v$-WE classifier of a particular functional form.

\begin{definition}
A classifier $c$ is a \emph{classifier in the reduced form}\footnote{Note that both {\fed bank-optimal} and {\fed $\varepsilon$-bank-optimal} classifiers from Propositions~\ref{prop_optimal_GE} and~\ref{prop_omniscient_algorithm} have a reduced form.} if $c=c(r(x,a),\,\overline{v}(x,a))$  (i.e., $c$ depends on $x$ and $a$ only through $r(x,a)$ and $\overline{v}(x,a)$).
\end{definition}
{\fed Note that in order to compute the outcome of a reduced form classifier, one must know the exact values of $r$ and $v$; we use their approximations instead.
However, the classifier from Proposition~\ref{lm_sample_complexity} turns out to be sensitive to small errors in $r$ and ${v}$. In particular, using the estimator $u$ instead of the exact underlying utility $v$ can dramatically change the outcome.}  In Proposition~\ref{lm:robust}, we describe the smoothing technique that allows us to ensure that the outcome of classification is robust to small perturbations of $r$ and $v$ in the following sense.
\begin{definition} A reduced-form classifier $c$ is $\varkappa$-\emph{robust} if
$$|c(s,w)-c(s',w')|\leq \frac{1}{\varkappa}\left(\frac{|s-s'|}{\max_{x,a} |r(x,a)|}+\frac{|{w}-{w}'|}{\max_{x,a} \overline{v}(x,a)}\right).$$
{\fed for all numbers $s,s'\in \R$ and $w,w'\in \R_+$.}
\end{definition}
Finally, Proposition~\ref{lm:robustness_and_estimators} bounds the losses in revenue and welfare of a $\varkappa$-robust classifier that relies on estimators of $v$ and $r$ instead of the exact values.

\begin{proposition}\label{lm_sample_complexity}
For $\delta>0$ and $\varepsilon>0$ satisfying Inequality~\eqref{eq_varepsilon}, a reduced form of an $\left(\varepsilon\cdot \max_{x,a} |{r}(x,a)|, \varepsilon\cdot \max_{x,a} \overline{v}(x,a)\right)$ {\fed bank-optimal} $v$-WE classifier
can be computed with probability of at least $1-\delta$ on a sample of size $O\left(\frac{1}{\varepsilon^2}\left(\log\frac{1}{\delta}+\log\log\frac{1}{\varepsilon}\right)\right)$.
\end{proposition}
\begin{proofof}{Proposition \ref{lm_sample_complexity}}
We mimic the algorithm from Proposition~\ref{prop_omniscient_algorithm} but change all the exact expectations to empirical averages.

By re-scaling, we assume $\max_{x,a}|r(x,a)|=\max_{x,a} \overline{v}(x,a)=1$. The algorithm from Proposition~\ref{prop_omniscient_algorithm} computes the following conditional expectations $O\left(\log^2\frac{1}{\varepsilon}\right)$ times: welfares  $\E[{v}(X,A,Y)c_\lambda(X)\mid A=a]$ and revenues $\E[r(X,A)c_\lambda(X)\mid A=a]=\E[(\alpha_+(X)Y-\alpha_-(X)(1-Y))c_\lambda(X)\mid A=a]$. By the Chernoff bound, for i.i.d. random variables $\xi_1,...,\xi_K$ in $[-1,1]$ we have $\P\left(\left|\E\xi_1-\frac{1}{K}\sum_{k=1}^K \xi_k\right|> \varepsilon \right)\leq 2\exp\left(-\frac{K\varepsilon^2}{2}\right)$. Hence, we ensure that each particular expectation is computed with accuracy $\varepsilon$ with probability $1-\delta'$ using a sample of size $O\left(\frac{1}{\varepsilon^2}\log \frac{1}{\delta'}\right)$ conditional on $\{A=a\}$. In order to have such a conditional sample,  the unconditional sample must be $\frac{1}{\P(A=a)}$-times bigger.
By the union bound, selecting $\delta'$ such that $\delta'\cdot O\left(\log^2\frac{1}{\varepsilon}\right)\leq \delta$ we guarantee that with probability of at least $1-\delta$ all $O\left(\log^2\frac{1}{\varepsilon}\right)$  expectations are computed with precision $\varepsilon$ throughout the run of the algorithm.
\end{proofof}

\begin{proposition}\label{lm:robust}
The result of Proposition~\ref{lm_sample_complexity} remain true if we additionally ask the computed classifier to be $\frac{\varepsilon}{3}$-robust.
\end{proposition}

\begin{proofof}{Proposition \ref{lm:robust}}
We describe the smoothing technique as usual assuming $\max_{x,a}|r(x,a)|=\max_{x,a}|\overline{v}(x,a)|=1$. To compute an $\frac{\varepsilon}{3}$-robust classifier $c$, we consider an auxiliary ``smoothed'' classification problem with $x^\circ=(x,\xi_v,\xi_r)$, where $\xi_v,\xi_r$ are uniformly distributed on $[0,\varepsilon/3]$ and are independent from each other and from $(x,a)$, and define $\overline{v}^\circ(x^\circ, a)=\overline{v}(x,a)+\xi_v$ and $r^\circ(x',a)=r(x,a)+\xi_r$. 

Next, we compute the reduced form of an $\left(\varepsilon/3, \varepsilon/3\right)$ {\fed bank-optimal} ${v}^\circ$-WE classifier $c^\circ$ in the new problem  and define a reduced-form classifier in the original problem (under $v$) by $c(r,\overline{v})=\E_{(\xi_v,\xi_r)} c^\circ\big(r+\xi_r, \overline{v}+\xi_v\big)$. If we slightly perturb $r$ and $\overline{v}$,  then $c^\circ$ is integrated over almost-coinciding squares in the $(r,\overline{v})$-plane and hence $|c(r,\overline{v})-c(r',\overline{v}')|\leq \frac{|r-r'|+|\overline{v}-\overline{v}'|}{\varepsilon/3}$, i.e., we get robustness. The constructed classifier $c$ equalizes welfare up to $2\varepsilon/3\leq \varepsilon$  since $v$ and $v^\circ$ differ by at most $\varepsilon/3$. 

It remains to check that $c$ generates $\varepsilon$-optimal revenue. The {\fed bank-optimal} classifier $c^*_{\mathrm{WE}(v)}$ in the original problem induces a classifier that equalizes welfare up to $\varepsilon/3$ and has revenue $R^\circ(c^*_{\mathrm{WE}(v)})\geq R(c^*_{\mathrm{WE}(v)})-\varepsilon/3$ in the new problem. We get  $R^\circ(c^\circ)+\varepsilon/3\geq R^\circ(c^*_{\mathrm{WE}(v)})\geq R(c^*_{\mathrm{WE}(v)})-\varepsilon/3$. Since $R(c)\geq R^\circ(c^\circ)-\varepsilon/3$, we obtain that $c$ generates $\varepsilon$-optimal revenue in the original problem.
\end{proofof}    

\begin{proposition}\label{lm:robustness_and_estimators} Suppose we have estimators ${{u}}$ and $\hat{{r}}$ such that $\E\left[|{\overline{u}}-\overline v|\right]\leq \eta_u\cdot \max_{x,a} v(x,a)$ and $\E\left[|\hat{r}-r|\right]\leq \eta_r\cdot \max_{x,a} |r(x,a)|$. Further, let $c$ be a reduced form of a $\varkappa$-robust $(\varepsilon,\varepsilon')$ {\fed bank-optimal} WE classifier and define $c'$ as the version of $c$ that uses the estimators ${{u}}$ and $\hat r$, i.e., $c'(x,a)=c(\hat r, {\overline{u}})$. Then, $c'$ is 
$\left(\hat{\varepsilon},\hat{\varepsilon'}\right)$ {\fed bank-optimal} with $$\hat{\varepsilon}=\varepsilon+\frac{\max_{x,a} |r(x,a)|}{\varkappa}\left(\eta_u+\eta_r\right)$$ and
$$\hat{\varepsilon'}=\varepsilon'+ \frac{\max_{x,a} {{v}(x,a)}}{\varkappa}\left(\eta_u+\eta_r\right)\left(\frac{1}{\P(A=0)}+\frac{1}{\P(A=1)}\right).$$
\end{proposition}

\begin{proofof}{Proposition \ref{lm:robustness_and_estimators}}
By $\varkappa$-robustness we get
\begin{align*}
\left|\E \left[v(x,a) c(\hat{r},{\overline{u}})\mid A=a\right]-\E \left[v(x,a) c({r},{\overline{v}})\mid A=a\right]\right|
&\leq\max_{x,a} v(x,a)\E\left[\left|c(\hat{r},{\overline{u}})- c({r},{\overline{v}})\right|\mid A=a\right] \\
&\leq\max_{x,a} v(x,a)\frac{1}{\P(A=a)}\E\left|c(\hat{r},{\overline{u}})- c({r},{\overline{v}})\right|  \\
& \leq\frac{\max_{x,a} v(x,a)}{\varkappa\cdot \P(A=a)}\left(\eta_u+\eta_r\right); 
\end{align*}
thus, by applying the triangle inequality we get 
\[
\left|W_{u,c(\hat{r},{\overline{u}})}(1)-W_{u,c(\hat{r},{\overline{u}})}(0)\right|\leq \varepsilon'+\frac{\max_{x,a} {{v}(x,a)}}{\varkappa}\left(\eta_u+\eta_r\right)\left(\frac{1}{\P(A=0)}+\frac{1}{\P(A=1)}\right).
\]
The argument for the revenue is similar but simpler since we need unconditional expectations only.
\end{proofof}

We are now ready to prove Proposition~\ref{prop:sample_and_estimators_appendix}. 
\begin{proofof}{Proposition~\ref{prop:sample_and_estimators_appendix}}
As usual assume $\max_{x,a} \overline{v}(x,a)=\max_{x,a} {r}(x,a)=1$.
By Proposition~\ref{lm:robust}, we can compute an $\frac{\varepsilon}{6}$-robust $\left(\frac{\varepsilon}{2},\frac{\varepsilon}{2}\right)$ {\fed bank-optimal} WE classifier $c$ in the reduced form on a sample of required size. The lower bound on $\varepsilon$ is chosen in such a way that 
$\frac{6}{\varepsilon}(\eta_u+\eta_r)\left(\frac{1}{\P(A=0)}+\frac{1}{\P(A=1)}\right)\leq \frac{\varepsilon}{2}$ and thus, by Proposition~\ref{lm:robustness_and_estimators}, the classifier $c(\hat{r}(x,a),{\overline{u}}(x,a))$ is an $(\varepsilon,\varepsilon)$ {\fed bank-optimal} WE classifier.
\end{proofof}

}\fi}

\end{document}